\documentclass[preprint,12pt]{article}

\usepackage{amssymb}
\usepackage{amsthm}
\usepackage{amsmath}
\usepackage{graphicx}

\begin{document}

\title{PI Consensus Error Transformation for Adaptive Cooperative Control of Nonlinear Multi-Agent Systems} 

\author{Haris E. Psillakis \\ National Technical University of Athens\\
H. Polytechniou 9, 15780 Athens, Greece\\
\texttt{hpsilakis@central.ntua.gr}}

\date{}

\maketitle

\begin{abstract}
A solution is provided in this note for the adaptive consensus problem of nonlinear multi-agent systems with unknown and non-identical control directions assuming a strongly connected underlying graph topology.
This is achieved  with the introduction of a novel variable transformation called PI consensus error transformation. 
The new variables include the position error of each agent from some arbitrary fixed point along with an integral term of the weighted total displacement of the agent's position from all neighbor positions. It is proven that if these new variables are bounded and regulated to zero, then asymptotic consensus among all agents is ensured. The important feature of this transformation is that it provides input decoupling in the dynamics of the new error variables making the consensus control design a simple and direct task.
Using  classical Nussbaum gain based techniques, distributed controllers  are  designed  to regulate the PI consensus error variables to zero and ultimately solve the agreement problem. The proposed approach also allows for a specific calculation of the final consensus point based on the controller parameter selection and the associated graph topology. Simulation results verify our theoretical derivations.
\end{abstract}

\newtheorem{corollary}{Corollary}
\newtheorem{theorem}{Theorem}
\newtheorem{assumption}{Assumption}
\newtheorem{lemma}{Lemma}
\newtheorem{remark}{Remark}
\newtheorem{definition}{Definition}

\section{Introduction}
%
%
%
%
Cooperative control has received growing interest among researchers over the last decades with applications in several areas including
spacecraft formation flying, sensor networks, and cooperative
surveillance \cite{OlfatiFaxMurray2007}, \cite{RenBeardAtkins2007}.  Consensus  is an important and fundamental problem within the field of cooperative control aiming to design distributed control algorithms using only local information to ensure that the agents reach an agreement on certain variables of interest \cite{JadbabaieLinMorse03}, \cite{OlfatiMurray04}. Detailed literature reviews on the subject can be found in the works \cite{RenBeardBook08,ShammaBook08,MehranEgerstedtBook10,RenCaoBook11,LewisBook2014}.

In some applications \cite{du2007adaptiveieeejoe}, \cite{AstolfiVisualServoing} the control directions might not be known a priori. R. Nussbaum in the seminal paper \cite{nussbaum1983somescl} proposed a class of nonlinear control gains to resolve this issue. Since then, the so called Nussbaum gain technique has been successfully applied and generalized  to different system classes \cite{ye98,ding1998globalieeeac,zhang2000adaptiveieeeac,ZPJiang04,liu2006globalieeeac, Psillakis2010, yan2010globalsiamsiamjco}.

The last few years the cooperative control of multi-agent systems with unknown control directions has been under research.  Chen \emph{et al.} \cite{chen2014adaptiveieeeac} considered the adaptive consensus of first-order and second-order agents with unknown identical control directions using a novel Nussbaum function. The consensus control problem with unknown identical control directions was also addressed later in \cite{radenkovic2016multi} using  delayed inputs and switching functions, in \cite{ma2017cooperativeamc} for higher-order integrators and in \cite{wang2017nonlinear} for nonlinear systems. The  unknown identical control directions assumption was relaxed in \cite{shi2015cooperativeieeease} and \cite{wang2016prescribed} for agent networks having a leader. Nussbaum functions were also employed for unknown identical and non-identical control directions in \cite{ding2015adaptiveauto,liu2015adaptiveieeeac,su2015cooperativeieeeac,Guo2017regulationieeetac}  in the framework of cooperative output regulation with the exosystem having the leader role.

For leaderless networks, Peng and Ye \cite{peng2014cooperativescl} removed this assumption but only for single-integrator systems without nonlinearities.
Also, in \cite{psillakis2016consensusieeeac},  the same problem was investigated under switching topologies using the nonlinear PI method \cite{ortega2002nonlinearscl,psillakis2016integratorejc}.
Recently in \cite{chen2016adaptiveieeeac}, Chen \emph{et al.} generalized the results of \cite{chen2014adaptiveieeeac} to nonlinear systems with partially unknown control directions using a novel Nussbaum function. However, as stated in the conclusion of \cite{chen2016adaptiveieeeac} the adaptive consensus problem with completely unknown non-identical control directions by using the Nussbaum function based approach is still an interesting open issue for further research.

In this paper we provide a solution to the above problem using a different line of attack. We introduce a state transformation into the so called \emph{PI consensus error variables} that include the position error of each agent from some arbitrary fixed point  along with an integral term of the weighted total displacement between the agent's position and all neighbor positions (consensus error). It is proven that if the new variables are bounded and converge towards zero then asymptotic consensus among all agents is ensured. The important feature of this transformation is that it provides input decoupling in the dynamics of the new error variables making the consensus control design a simple and direct task. Applying the proposed transformation and using standard Nussbaum gain techniques we obtain a  straightforward solution for the unsolved problem of adaptive consensus for nonlinear systems with unknown and non-identical control directions.

The rest of this paper is organized as follows. In Section \ref{section2}, some preliminaries on graph theory and  Nussbaum functions  are recalled. Also, the adaptive consensus design problem of this work is formulated. In Section \ref{section3}, we propose the new PI consensus error transformation and prove the key lemma motivating its use. Adaptive consensus designs are given for first and second-order nonlinear agent models with unknown and non-identical control directions in Section \ref{section4}. In  Section  \ref{section5}, the obtained results are verified by a simulation example. Finally,  some concluding remarks are given in Section \ref{section6}.

%
%
%
%
%
\section{Preliminaries}\label{section2}
\subsection{Graph Theory}
In what follows, we revisit basic definitions from graph theory \cite{RenBeardBook08,ShammaBook08,MehranEgerstedtBook10,RenCaoBook11}.  A
directed graph is denoted by $\mathcal{G}=\left( \mathcal{V},\mathcal{E},\mathcal{A}\right)$ where $\mathcal{V}=\left\{ v_{1},v_{2},\ldots ,v_{N}\right\}
$ represents the nonempty set of nodes and $\mathcal{E}\subseteq \mathcal{%
V}\times \mathcal{V}$ is the set of edges. Matrix $\mathcal{A}=\left[ a_{ij}\right] \in\mathbb{R}
^{N\times N}$ is called the adjacency matrix and the element $a_{ij}$ represents the
coupling strength of edge $\left( j,i\right)$ with $a_{ij}>0$ if $\left( j,i\right)\in\mathcal{E}$  and $a_{ij}=0$ otherwise. 
The set of neighbors of $i$-th agent is then defined by $\mathcal{N}_{i}=\left\{ j\in \mathcal{V}:\left( j,i\right) \in
\mathcal{E}\right\} $.
Let $d_{i}=\sum_{j=1}^{N}a_{ij}$ be the in-degree of vertex $i$, and denote $\mathcal{D}=\mathrm{%
diag}\left\{ d_{1},\ldots ,d_{N}\right\} $ the in-degree matrix. Then the Laplacian matrix is defined as $L=\mathcal{D}-\mathcal{A}$. The directed path with length $l$ is defined with a sequence of edges in the form $\left( \left( i_{1},i_{2}\right) ,\left(
i_{2},i_{3}\right) ,\ldots ,\left( i_{l},i_{l+1}\right) \right) $ where $%
\left( i_{j},i_{j+1}\right) \in \mathcal{E}$ for $j=1,\ldots ,l$ and $%
i_{j}\neq i_{k}$ for $j,k=1,\ldots ,l$ and $j\neq k$.
If there exists a directed path between any two distinct nodes in directed graph $\mathcal{G}$, the graph is said to be strongly
connected.

A strongly connected graph has a simple zero eigenvalue associated with right eigenvector $\mathbf{1}:=[1,1,\cdots,1]^T$ and left eigenvector $\omega:=[\omega_1, \omega_2,\cdots,\omega_N]^T$ where $\omega_i$ are positive real numbers (Corollary 3.2 of \cite{LiJiaIJC2009}) with $\sum_{i=1}^N{\omega_i}=1$. All other eigenvalues of $L$ have positive real parts for a strongly connected graph. Thus, $L$ admits  the following Jordan decomposition
\begin{equation}\label{L_jordan_decomp}
  L=\left[\begin{array}{cc}
                                                                              U & \mathbf{1}
                                                                            \end{array}
\right]\left[
         \begin{array}{cc}
           D & 0 \\
           0 & 0 \\
         \end{array}
       \right]\left[
                \begin{array}{c}
                  V^T \\
                  \omega^T \\
                \end{array}
              \right]=UDV^T
\end{equation}
where $D$ is a $(N-1)\times(N-1)$ upper triangular matrix that includes all Jordan blocks related to the $N-1$ eigenvalues of $L$ with positive real part. $U, V\in \mathbb{C}^{N\times (N-1)}$ are  matrices with columns the right and left eigenvectors related to these eigenvalues respectively. Then, for the exponential matrix $e^{Lt}$ it holds true that
\begin{equation}\label{expmLaplacian}
  e^{Lt}=Ue^{Dt}V^T+\mathbf{1}\omega^T.
\end{equation}
\subsection{Nussbaum functions}
\begin{definition} \cite{nussbaum1983somescl}
\label{Nussbaumtypefunc}The function $N\left( \cdot \right) $ is called a
Nussbaum-type function if it has the following properties:%
\begin{equation}
\left\{
\begin{array}{c}
\lim_{k\rightarrow \infty }\sup \left( \frac{1}{k}\int_{0}^{k}N\left( \tau
\right) d\tau \right) =+\infty \\
\lim_{k\rightarrow \infty }\inf \left( \frac{1}{k}\int_{0}^{k}N\left( \tau
\right) d\tau \right) =-\infty%
\end{array}%
\right.   \label{nussbaumtypedefine}
\end{equation}
\end{definition}

Commonly used Nussbaum-type functions include $e^{k^{2}}\cos(\pi k/2)$, $k^{2}\sin(k)$ and $k^{2}\cos(k)$ among others. In this paper, we choose an even smooth Nussbaum function $N(k)=k^{2}\cos(k)$. The following lemma is central in the analysis of Nussbaum control schemes.
\begin{lemma}\label{lemma_Nuss_convergence}\cite{ye98}
Let $V(\cdot)$ and $k(\cdot)$ be smooth functions defined on $[0,t_f)$ with $V (t)\geq 0$  $\forall t \in [0, t_f)$, $N(\cdot)$ be an even smooth Nussbaum-type function, and $\theta_0$ be a nonzero constant. If the
following inequality holds:
\begin{equation}\label{Nussbaum_convergence}
  V (t) \leq \int_0^t{(\theta_0N(k(\tau))+1)\dot{k}(\tau)d\tau}+c, \quad \forall t\in[0,t_f)
\end{equation}
where $c\in\mathbb{R}$ is a constant, then $V (t)$, $k(t)$ and $\int_0^t{(\theta_0N(k(\tau))+1)\dot{k}(\tau)d\tau}$ must be bounded on $[0,t_f)$.
\end{lemma}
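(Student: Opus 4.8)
The plan is to prove by contradiction that $k(\cdot)$ is bounded on $[0,t_f)$; boundedness of $V$ and of the integral term then follow with no further work. The first move is a change of integration variable: since $k$ is smooth, substituting $s=k(\tau)$ yields $\int_0^t(\theta_0 N(k(\tau))+1)\dot k(\tau)\,d\tau = \theta_0 F(k(t)) + k(t) - k(0)$, where $F(x):=\int_0^x N(s)\,ds$. Because $N$ is even, $F$ is odd, and so the Nussbaum properties \eqref{nussbaumtypedefine} --- which state $\limsup_{x\to+\infty}F(x)/x=+\infty$ and $\liminf_{x\to+\infty}F(x)/x=-\infty$ --- continue to hold as $x\to-\infty$ as well. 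With this, the hypothesis \eqref{Nussbaum_convergence} reads
\begin{equation*}
0 \le V(t) \le \theta_0 F(k(t)) + k(t) + c_0, \qquad t\in[0,t_f),
\end{equation*}
with $c_0 := c - k(0)$, so everything hinges on the growth of $\theta_0 F(k(t)) + k(t)$ against the nonnegative quantity $V(t)$.

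Suppose $k$ is unbounded on $[0,t_f)$. Since $k$ is continuous and $[0,t_f)$ is connected, $k([0,t_f))$ is an interval containing $k(0)$, hence (by the intermediate value theorem) it contains $[k(0),+\infty)$ or $(-\infty,k(0)]$. Consider the first case with $\theta_0>0$, the remaining combinations being identical up to signs: by the $\liminf$ property there is a sequence $x_n\to+\infty$ with $F(x_n)/x_n\to-\infty$, and for $n$ large $x_n$ lies in the image of $k$, so there exists $t_n\in[0,t_f)$ with $k(t_n)=x_n$. Then
\begin{equation*}
\theta_0 F(k(t_n)) + k(t_n) = x_n\Bigl(\theta_0\,\frac{F(x_n)}{x_n} + 1\Bigr) \;\longrightarrow\; -\infty,
\end{equation*}
so $V(t_n)\le \theta_0 F(k(t_n)) + k(t_n) + c_0 \to -\infty$, contradicting $V\ge 0$. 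If instead $\theta_0<0$ one uses the $\limsup$ property so that $\theta_0 F(x_n)/x_n\to-\infty$; and when the image contains $(-\infty,k(0)]$ one evaluates along $x_n\to-\infty$, where $F(x)/x$ again has $\limsup=+\infty$ and $\liminf=-\infty$. Every case yields $V(t_n)\to-\infty$, which is impossible; hence $k$ is bounded on $[0,t_f)$, say $|k(t)|\le K$.

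With $|k|\le K$, continuity of $N$ gives $|F(k(t))|\le K\max_{|s|\le K}|N(s)|<\infty$, so the right-hand side of the displayed inequality above is bounded; combined with $V\ge 0$ this shows $V$ is bounded, and then the integral $\int_0^t(\theta_0 N(k(\tau))+1)\dot k(\tau)\,d\tau = \theta_0 F(k(t)) + k(t) - k(0)$ is bounded as well, which is the claim.

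The step I expect to demand the most care is the contradiction argument: because $k$ need not be monotone one cannot simply assert $k(t)\to\pm\infty$, and must instead extract times $t_n$ realizing the prescribed values $x_n$ supplied by the Nussbaum property (via the intermediate value theorem applied to the image interval), while correctly pairing the sign of $\theta_0$ with the $\limsup$ or $\liminf$ branch and invoking the evenness of $N$ (equivalently, oddness of $F$) to handle the direction $x_n\to-\infty$ that Definition~\ref{Nussbaumtypefunc} does not address directly.
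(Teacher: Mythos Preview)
The paper does not supply its own proof of this lemma: it is quoted verbatim from \cite{ye98} and used as a black box in the analysis of Sections~\ref{subsection_first_order_agent_design}--\ref{subsection_second_order_agent_design}. There is therefore no paper proof to compare against.

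Your argument is correct and is in fact the standard one. One cosmetic slip: the change of variable gives
\[
\int_0^t(\theta_0 N(k(\tau))+1)\dot k(\tau)\,d\tau=\theta_0\bigl(F(k(t))-F(k(0))\bigr)+k(t)-k(0),
\]
so the constant you call $c_0$ should absorb $-\theta_0 F(k(0))$ as well as $-k(0)$; this does not affect anything downstream. The delicate part you flagged --- that $k$ need not be monotone, so one uses the intermediate value theorem on the image interval to realize the values $x_n$ coming from the Nussbaum oscillation, and pairs the sign of $\theta_0$ with the correct $\limsup$/$\liminf$ branch --- is handled cleanly, and the use of evenness of $N$ (hence oddness of $F$, hence $F(x)/x$ inherits the same $\limsup$/$\liminf$ behaviour as $x\to-\infty$) is exactly what is needed to cover the case where $k$ is unbounded below.
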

\subsection{Problem formulation}

The nonlinear agent models of \cite{chen2014adaptiveieeeac} are considered in this work. We assume either $N$ first-order agents with state
$x_{i}\in\mathbb{R}$ and dynamics
\begin{equation}
\dot{x}_{i}=b_{i}u_{i}+\theta_i^T\phi_i(x_i),\quad i=1,2,\dots ,N,  \label{agentdynamicssi}
\end{equation}%
or $N$ second-order agents with position $x_{i}\in\mathbb{R}$, velocity $v_{i}\in\mathbb{R}$ and dynamics
\begin{equation}
\left\{
\begin{array}{l}
\dot{x}_{i}=v_{i} \\
\overset{\cdot }{v}_{i}=b_{i}u_{i}+\theta_i^T\phi_i(x_i,v_i)%
\end{array}%
\right. ,\quad i=1,2,\dots ,N,  \label{agentdynamicsdi}
\end{equation}
where $u_{i}\in\mathbb{R}$ is the control input and $b_{i}\in\mathbb{R}$ is the high frequency gain. Function $\phi_i:\mathbb{R}\rightarrow\mathbb{R}^{\ell_i}$ for first-order agent model ($\phi_i:\mathbb{R}\times\mathbb{R}\rightarrow\mathbb{R}^{\ell_i}$ for second-order agent model) is a known regressor vector of agent smooth nonlinearities and $\theta_i\in \mathbb{R}^{\ell_i}$ is an unknown parameter vector.

\begin{assumption}
\label{assumcontrolgain}The control gains $b_{i},$ $i=1,2,\ldots ,N$ are
unknown and nonzero constants.
\end{assumption}

\begin{remark}
The assumption  $b_{i}\neq 0$ for all $i=1,2,\ldots ,N$ is necessary for the controllability of each agent dynamics. The signs of the gains $b_i$ may be different and their prior knowledge is not needed.
\end{remark}
\begin{assumption}
\label{assumption_topology}The underlying graph topology is strongly connected.
\end{assumption}

The control objective is to design a new class of algorithms for agents \eqref{agentdynamicssi} or \eqref{agentdynamicsdi} under Assumptions \ref{assumcontrolgain} and \ref{assumption_topology} to achieve consensus such that
\begin{equation}
\lim_{t\rightarrow \infty }\left( x_{i}(t)-x_{k}(t)\right) =0.
\label{objectivesi}
\end{equation}
for first-order agents with $i,k\in\{1,2,\cdots,N\}$ or
\begin{equation}
\left\{
\begin{array}{c}
\lim_{t\rightarrow \infty }\left( x_{i}(t)-x_{k}(t)\right) =0 \\
\lim_{t\rightarrow \infty }\left( v_{i}(t)-v_{k}(t)\right) =0%
\end{array}%
\right.   \label{objectivedi}
\end{equation}%
for second-order agents with $i,k\in \left\{ 1,2,\ldots ,N\right\} $.

\section{PI Consensus Error Transformation}\label{section3}
Let $x:=[x_1,x_2,\cdots,x_N]^T\in\mathbb{R}^N$ the agents position vector and assume that $L$ is the Laplacian matrix associated with the underlying graph topology. We define the new variables
\begin{equation}\label{z_definition}
  z_i(t):=x_i(t)-\bar{x}_i+\rho\int_0^t{\sum_{j\in\mathcal{N}_i}a_{ij}(x_i(s)-x_j(s))ds}
\end{equation}
for $i=1,2,\cdots, N$ with $\rho>0$ and $\bar{x}_i\in\mathbb{R}$ some arbitrary fixed point. Define also the vectors $\bar{x}:=[\bar{x}_1,\bar{x}_2,\cdots,\bar{x}_N]^T\in\mathbb{R}^N$  and $z:=[z_1,z_2,\cdots,z_N]^T\in\mathbb{R}^N$. The above transformation is called \emph{PI consensus error transformation} since it includes the position error $x_i-\bar{x}_i$ of each agent from some arbitrary fixed point  along with an integral term of the weighted total displacement of the agent's position from all neighbor positions $\sum_{j\in\mathcal{N}_i}{a_{ij}(x_i-x_j)}$ (consensus error). The following lemma holds true which motivates the use of the new variables $z_i$ in the design procedure.
\begin{lemma}\label{main_lemma}
Consider a set of $N$ agents with positions $x_i:[0,\infty)\rightarrow\mathbb{R}$. If \begin{description}
     \item[i)] the underlying graph is strongly connected
     \item[ii)] $x_i(t)$ is continuous for all $i=1,2,\cdots,N$
     \item[iii)] $z_i$ is bounded with $\lim_{t\rightarrow\infty}z_i(t)=0$ for all $i=1,2,\cdots,N$
   \end{description}
then $\lim_{t\rightarrow\infty}(x_i(t)-x_j(t))=0$ and $\lim_{t\rightarrow\infty}x_i(t)=\sum_{j=1}^N{\omega_j\bar{x}_j}$
where $\omega:=[\omega_1,\cdots,\omega_N]^T\in\mathbb{R}^N$ is the left eigenvector of $L$ associated with the zero eigenvalue.
\end{lemma}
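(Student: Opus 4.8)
The plan is to write the PI consensus error transformation in vector form and read it as a linear ODE whose asymptotics are governed by the spectral data of $L$ recalled in \eqref{L_jordan_decomp}--\eqref{expmLaplacian}. Stacking \eqref{z_definition} gives $z(t)=x(t)-\bar x+\rho L\int_0^t x(s)\,ds$. Set $w(t):=\int_0^t x(s)\,ds$, which by hypothesis ii) is well defined and of class $C^1$ on $[0,\infty)$ with $\dot w=x$. Since $z(t)+\bar x=x(t)+\rho L w(t)$ by construction, $w$ solves $\dot w+\rho L w=z+\bar x$ with $w(0)=0$; as the forcing term is continuous and bounded (hypothesis iii) makes $z$ bounded), uniqueness of solutions gives $w(t)=\int_0^t e^{-\rho L(t-s)}\bigl(z(s)+\bar x\bigr)\,ds$, and therefore $x(t)=\dot w(t)=z(t)+\bar x-\rho L w(t)$. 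Everything thus reduces to computing $\lim_{t\to\infty}\rho L w(t)$.

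Next I would insert $e^{-\rho L(t-s)}=Ue^{-\rho D(t-s)}V^T+\mathbf 1\omega^T$ from \eqref{expmLaplacian}. Using $L\mathbf 1=0$ and $LU=UD$ (both immediate from \eqref{L_jordan_decomp} together with the biorthogonality $V^TU=I_{N-1}$), one obtains $\rho L w(t)=\rho U\int_0^t De^{-\rho D(t-s)}V^Tz(s)\,ds+\rho U\Bigl(\int_0^t De^{-\rho D(t-s)}\,ds\Bigr)V^T\bar x$. The second integral is explicit: since $D$ is invertible (all its eigenvalues have positive real parts), $\rho\int_0^t De^{-\rho D(t-s)}\,ds=I-e^{-\rho Dt}$, so the second term equals $U(I-e^{-\rho Dt})V^T\bar x$, which tends to $UV^T\bar x$ because $e^{-\rho Dt}\to 0$ exponentially.

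The heart of the argument is showing the first term $\rho U\int_0^t De^{-\rho D(t-s)}V^Tz(s)\,ds$ tends to zero. For this I would use an exponential estimate $\|De^{-\rho D\tau}\|\le Me^{-\beta\tau}$ valid for suitable $M,\beta>0$ (absorbing the polynomial-in-$\tau$ factors produced by the Jordan blocks of $D$ into a slightly smaller rate), so that the term is bounded in norm by $c\int_0^t e^{-\beta(t-s)}\|z(s)\|\,ds$ for some constant $c$. A routine convolution argument then finishes it: given $\varepsilon>0$, choose $T$ with $\|z(s)\|<\varepsilon$ for $s\ge T$ (hypothesis iii)); splitting the integral at $T$, the tail is at most $c\varepsilon/\beta$ while the head is at most $ce^{-\beta t}\int_0^T e^{\beta s}\|z(s)\|\,ds$, a fixed constant times $e^{-\beta t}$, which vanishes as $t\to\infty$. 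Hence $\limsup_{t\to\infty}$ of the norm is $\le c\varepsilon/\beta$, and letting $\varepsilon\downarrow 0$ gives the claimed limit $0$. This convolution estimate is the only genuinely delicate point in the proof.

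Putting the pieces together, $\rho L w(t)\to UV^T\bar x$, so $x(t)=z(t)+\bar x-\rho L w(t)\to\bar x-UV^T\bar x$. Setting $t=0$ in \eqref{expmLaplacian} gives $I=UV^T+\mathbf 1\omega^T$, hence $\bar x-UV^T\bar x=\mathbf 1\omega^T\bar x=\bigl(\sum_{j=1}^N\omega_j\bar x_j\bigr)\mathbf 1$. Therefore $x_i(t)\to\sum_{j=1}^N\omega_j\bar x_j$ for every $i$, which in particular yields $x_i(t)-x_j(t)\to 0$ for all $i,j$, completing the proof; the remaining manipulations are just bookkeeping with the spectral decomposition already supplied in Section \ref{section2}.
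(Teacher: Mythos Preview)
Your proof is correct and follows essentially the same route as the paper's: rewrite the transformed errors as a linear ODE driven by $z+\bar x$, use the Jordan decomposition \eqref{L_jordan_decomp}--\eqref{expmLaplacian} to isolate the $\mathbf 1\omega^T$ component, evaluate the $\bar x$-convolution explicitly, and kill the $z$-convolution by an exponential-times-vanishing-input estimate. The only cosmetic differences are that the paper takes $w(t)=\int_0^t Lx(s)\,ds$ rather than your $w(t)=\int_0^t x(s)\,ds$ (so it tracks $\rho w$ where you track $\rho L w$, the same quantity) and splits the convolution at $t/2$ instead of at a fixed $T$; neither changes the substance.
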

\begin{proof}
Define
\begin{equation*}
  w_i(t):=\int_0^t{\sum_{j\in\mathcal{N}_i}a_{ij}(x_i(\tau)-x_j(\tau))d\tau}
\end{equation*}
and the vector variable $w(t):=[w_1(t),\cdots,w_N(t)]^T$. Variable $w(t)$ can also be written as
\begin{equation}\label{w_definition}
  w(t):=\int_0^t{Lx(s)ds}.
\end{equation}
Then, it holds true that
\begin{equation}\label{w_dynamics}
  \dot{w}(t)=L{x}(t)=-\rho L w(t)+ L(z(t)+\bar{x}).
\end{equation}
The solution vector of \eqref{w_dynamics} is $w(t)=f(t)+g(t)$ with
\begin{align}
  f(t)&:=\int_0^t{e^{-\rho L(t-s)}Lz(s)ds}\label{f_def}\\
  g(t)&:=\int_0^t{e^{-\rho L(t-s)}L\bar{x}ds}\label{g_def}.
\end{align}
Using the Jordan decomposition of the Laplacian matrix $L$ described in \eqref{L_jordan_decomp} and the exponential matrix \eqref{expmLaplacian} we result in $e^{-\rho Lt}L=Ue^{-\rho Dt}DV^T$. 
Thus, it holds true that
\begin{equation}\label{}
 g(t)=\int_0^t{Ue^{-\rho D(t-s)}DV^T\bar{x}ds}=\frac{1}{\rho}U\left(\mathbb{I}- e^{-\rho Dt}\right)V^T\bar{x}
\end{equation}
and therefore $\lim_{t\rightarrow\infty}g(t)=(1/\rho)UV^T\bar{x}$. For the other term $f(t)$ of $w(t)$ convergence to zero is also ensured if $z$ is bounded and $\lim_{t\rightarrow\infty}z(t)=0$. Since all diagonal elements of $D$ have positive real parts there exist some constants $\rho_1,\lambda_1>0$ such that $\|Ue^{-\rho Dt}DV^T\|\leq \lambda_1e^{-\rho_1 t}$ for all $t\geq 0$. Also, since $\lim_{t\rightarrow\infty}z(t)=0$ for every $\epsilon>0$ there exists time $T(\epsilon)>0$ such that $\|z(t)\|\leq \rho_1\epsilon/2\lambda_1$ for all $t\geq T(\epsilon)$. Moreover, the boundedness of $z$ implies that there exist some $c>0$ such that $\|z(t)\|\leq c$ for all $t\geq 0$. Thus, for time $t\geq \max\{2T(\epsilon),(2/\rho_1)\ln(2c\lambda_1/\rho_1\epsilon)\}$  it holds true that
\begin{align}\label{}
  \|f(t)\| \leq &\int_{0}^{t/2}{\|e^{-\rho L(t-s)}L \|\|z(s)\|ds}+\int_{t/2}^{t}{\|e^{-\rho L(t-s)}L \|\|z(s)\|ds}\nonumber\\
  \leq &  c\int_{0}^{t/2}{\|Ue^{-\rho D(t-s)}DV^T \|ds} +\frac{\rho_1\epsilon}{2\lambda_1}\int_{t/2}^{t}{\|Ue^{-\rho D(t-s)}DV^T \|ds}\nonumber\\
  \leq & \frac{c\lambda_1}{\rho_1}e^{-\rho_1t/2}+\frac{\epsilon}{2}\leq \epsilon.
\end{align}
Hence,  $\lim_{t\rightarrow\infty}f(t)=0$. From the above analysis,  $\lim_{t\rightarrow\infty}w(t)=\lim_{t\rightarrow\infty}[f(t)+g(t)]=(1/\rho)UV^T\bar{x}$   and since $x(t)=z(t)-\rho w(t)+\bar{x}$ the variable vector $x$ also converges to
\begin{align*}
 \lim_{t\rightarrow\infty}x(t)=&\lim_{t\rightarrow\infty}z(t)-\rho\lim_{t\rightarrow\infty} w(t)+\bar{x}\\
 =&\bar{x}-UV^T\bar{x}=\mathbf{1}\omega^T\bar{x}.
\end{align*}
The above limit yields the desired consensus property since  for the difference $x_i-x_j$ we have
\begin{equation}\label{xi-xj}
 \lim_{t\rightarrow\infty} (x_i(t)-x_j(t))=(e_i-e_j)^T\lim_{t\rightarrow\infty}x(t)=0
\end{equation}
and $\lim_{t\rightarrow\infty}x_i(t)=e_i^T\lim_{t\rightarrow\infty}x(t)=\omega^T\bar{x}$
where $e_i$ is the $i$-th column of the identity matrix.
\end{proof}
\begin{remark}\label{remark_xbar_selection}
If each agent selects  the fixed point $\bar{x}_i=x_i(0)$ in \eqref{z_definition} and the conditions of Lemma \ref{main_lemma} are true then all agents converge towards the weighted average of their initial conditions $\sum_{j=1}^N{\omega_jx_j(0)}$.
\end{remark}
\begin{remark}\label{remark_consensus_error_variables_usage}
It is to be noted that there are other error variables  such as $\xi_i:=\sum_{j\in\mathcal{N}_i}{a_{ij}(x_i-x_j)}$ which, if regulated to zero, also ensure asymptotic consensus among the agents. However, transformation \eqref{z_definition} has an input decoupling property which is very important in distributed control design. Only the input $u_i$ is present in the dynamics of $z_i$ while all neighbour inputs $u_j$ ($j\in\mathcal{N}_i$) are left out. This decoupling property does not occur in the dynamics of $\xi_i$ and is the main source of difficulty in earlier designs \cite{chen2014adaptiveieeeac}, \cite{chen2016adaptiveieeeac}. From this point of view, the use of transformation \eqref{z_definition} could possibly find applications in several other leaderless consensus problems involving different or more general systems classes than \eqref{agentdynamicssi} or \eqref{agentdynamicsdi}.
\end{remark}

\section{Distributed Control Design}\label{section4}
The proposed transformation reduces the consensus problem to a simple distributed regulation problem. If each agent's input is selected such that the corresponding PI consensus error variable is bounded and regulated to zero then consensus among all agents will occur. This is shown in Subsection \ref{subsection_first_order_agent_design} and \ref{subsection_second_order_agent_design} for agents of the form \eqref{agentdynamicssi} and \eqref{agentdynamicsdi} respectively.

\subsection{First-Order Agents}\label{subsection_first_order_agent_design}
Consider $N$ agents of the form \eqref{agentdynamicssi} satisfying Assumptions \ref{assumcontrolgain}-\ref{assumption_topology}. The dynamics of the new error variables $z_i$ defined in \eqref{z_definition} are given by
\begin{equation}\label{zi_dynamics}
 \dot{z}_i=b_iu_i +\theta_i^T\phi_i(x_i)+\rho \sum_{j\in\mathcal{N}_i}{a_{ij}(x_i-x_j)}.
\end{equation}
Let now the nonnegative function $V_i=(1/2)z_i^2$. From  \eqref{zi_dynamics} the $V_i$ time derivative has the following form
\begin{equation}\label{Vidot}
  \dot{V}_i=z_i\left[b_iu_i +\theta_i^T\phi_i(x_i)+\rho \sum_{j\in\mathcal{N}_i}{a_{ij}(x_i-x_j)}\right].
\end{equation}
Consider also a parameter vector estimators $\hat{\theta}_i$ of $\theta_i$ with estimation error $\tilde{\theta}_i:=\hat{\theta}_i-\theta_i$. If we select the parameter adaptation law
\begin{equation}\label{hatthetasi}
\dot{\hat{\theta}}_i=\gamma_i\phi_i(x_i)z_i
\end{equation}
with $\gamma_i>0$ then  the time derivative of the nonnegative function $\bar{V}_i:=V_i+(1/2\gamma_i)\|\tilde{\theta}_i\|^2$ takes the form
\begin{equation}\label{barVidot}
  \dot{\bar{V}}_i=z_i\left[b_iu_i +\hat{\theta}_i^T\phi_i(x_i)+\rho \sum_{j\in\mathcal{N}_i}{a_{ij}(x_i-x_j)}\right].
\end{equation}
Selecting now the distributed control law
\begin{align}\label{uisi}
 u_i=N(\zeta_i)\left[\hat{\theta}_i^T\phi_i(x_i)+\rho\sum_{j\in\mathcal{N}_i}{a_{ij}(x_i-x_j)}+\nu z_i\right]
\end{align}
with $\nu>0$ and Nussbaum parameter update law
\begin{align}\label{zetaisi}
 \dot{\zeta}_i=\nu z_i^2+z_i\left[\hat{\theta}_i^T\phi_i(x_i)+\rho\sum_{j\in\mathcal{N}_i}{a_{ij}(x_i-x_j)}\right]
\end{align}
we obtain
\begin{align}\label{dotbarvisi}
\dot{\bar{V}}_i=-\nu z_i^2+\dot{\zeta}_i+b_i N(\zeta_i)\dot{\zeta}_i.
\end{align}
Integrating \eqref{dotbarvisi} over $[0,t]$ we result in
\begin{align}\label{barVibound}
  {\bar{V}}_i(t)={\bar{V}}_i(0)&-\nu \int_0^t{z_i^2(s)ds}\nonumber\\
  &+\int_0^t{(b_iN(\zeta_i(\tau))+1)\dot{\zeta}_i(\tau)d\tau}.
\end{align}
The dynamical system $\dot{x}_{ag}=f_{ag}(x_{ag})$ with augmented state vector $x_{ag}:=[x_1,w_1,\hat{\theta}_1,\zeta_1,\cdots,x_N,\allowbreak w_N,\allowbreak \hat{\theta}_N,\zeta_N]^T$ defined by \eqref{agentdynamicssi}, \eqref{hatthetasi}, \eqref{uisi}, \eqref{zetaisi} has a smooth locally Lipschitz map $f_{ag}$ and therefore a maximal solution exists over some interval $[0,t_f)$ \cite{morris1973differential}. From  \eqref{barVibound} and Lemma \ref{lemma_Nuss_convergence} we have that $z_i,\hat{\theta}_i,\zeta_i$, $\int_0^t{z_i^2(\tau)d\tau}$ are bounded in $[0,t_f)$. The boundedness of $z$ implies the boundedness of $w(t)$ since $w(t)=\int_0^t{Ue^{-\rho D (t-\tau)}DV^T[z(\tau)+\bar{x}]d\tau}$ and therefore $x=\bar{x}+z-\rho w$ is also bounded. Thus, the whole solution $x_{ag}$ is bounded and the final time can be extended to infinity ($t_f=\infty$) \cite{morris1973differential}. Due to \eqref{uisi} we also have $u_i\in\mathcal{L}_{\infty}$ and therefore \eqref{zi_dynamics} yields $\dot{z}_i\in\mathcal{L}_{\infty}$.   Combining this fact with $z_i\in\mathcal{L}_{\infty}\cap\mathcal{L}_{2}$ we result from Barbal\'at's Lemma  in $\lim_{t\rightarrow\infty}z_i(t)=0$ $\forall i=1,2,\cdots,N$. Since all conditions of Lemma \ref{main_lemma} hold true asymptotic consensus is achieved. Thus, the following theorem has been proved.
\begin{theorem}\label{theorem_consensus_si}
Consider a set of $N$ agents with dynamics described by \eqref{agentdynamicssi} satisfying Assumptions \ref{assumcontrolgain}-\ref{assumption_topology}. If we select the control input \eqref{uisi} with parameter estimation law \eqref{hatthetasi} and Nussbaum parameter update law \eqref{zetaisi} then all signals are bounded in the closed loop and $\lim_{t\rightarrow\infty}(x_i(t)-x_j(t))=0$ with $\lim_{t\rightarrow\infty}x_i(t)=\sum_{j=1}^N{\omega_j\bar{x}_j}$.
\end{theorem}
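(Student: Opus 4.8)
The plan is to run a standard Nussbaum-gain stability argument on the \emph{input-decoupled} error dynamics \eqref{zi_dynamics} and then convert regulation of the $z_i$ into consensus via Lemma \ref{main_lemma}. First I would close the loop: substituting the adaptation law \eqref{hatthetasi}, the control law \eqref{uisi} and the Nussbaum update \eqref{zetaisi} into $\dot{\bar{V}}_i$, the right-hand side collapses to \eqref{dotbarvisi}, whose integrated form is \eqref{barVibound}. Since the closed-loop field $f_{ag}$ is smooth and locally Lipschitz, a maximal solution exists on some interval $[0,t_f)$, and all estimates are first derived there. On that interval \eqref{barVibound} is precisely of the form required by Lemma \ref{lemma_Nuss_convergence}, with $V=\bar{V}_i\ge 0$, $k=\zeta_i$ and nonzero constant $\theta_0=b_i$ (nonzero by Assumption \ref{assumcontrolgain}); hence $\bar{V}_i$, $\zeta_i$ and the Nussbaum integral are bounded on $[0,t_f)$, which gives $z_i,\hat{\theta}_i,\zeta_i\in\mathcal{L}_\infty[0,t_f)$ and, from the $-\nu\int_0^t z_i^2$ term, $z_i\in\mathcal{L}_2[0,t_f)$.

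Next I would upgrade boundedness of the error $z$ to boundedness of the actual positions $x$. Writing $w(t)=\int_0^t e^{-\rho L(t-s)}L[z(s)+\bar{x}]\,ds$ and using $e^{-\rho Lt}L=Ue^{-\rho Dt}DV^T$ from \eqref{L_jordan_decomp}--\eqref{expmLaplacian} together with an exponential bound $\|Ue^{-\rho Dt}DV^T\|\le\lambda_1 e^{-\rho_1 t}$ (valid because every diagonal entry of $D$ has positive real part), $w$ is the convolution of a bounded signal with an exponentially decaying kernel, so $w\in\mathcal{L}_\infty$ and therefore $x=\bar{x}+z-\rho w\in\mathcal{L}_\infty$. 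Continuity of $\phi_i$ then gives $\phi_i(x_i)\in\mathcal{L}_\infty$, and \eqref{uisi} yields $u_i\in\mathcal{L}_\infty$; thus the whole augmented state remains bounded on $[0,t_f)$ and the solution extends to $t_f=\infty$.

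Finally, \eqref{zi_dynamics} now gives $\dot{z}_i\in\mathcal{L}_\infty$, so from $z_i\in\mathcal{L}_\infty\cap\mathcal{L}_2$ Barbal\'at's lemma yields $\lim_{t\to\infty}z_i(t)=0$ for every $i$. Since the graph is strongly connected (Assumption \ref{assumption_topology}), each $x_i$ is continuous, and $z_i$ is bounded with $z_i\to 0$, all hypotheses of Lemma \ref{main_lemma} are met, and its conclusion delivers both $\lim_{t\to\infty}(x_i(t)-x_j(t))=0$ and $\lim_{t\to\infty}x_i(t)=\sum_{j=1}^N\omega_j\bar{x}_j$.

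I expect the only genuine subtlety to be the bookkeeping around a possible finite escape time: Lemma \ref{lemma_Nuss_convergence} must be invoked on the \emph{a priori} finite interval $[0,t_f)$, and only afterwards can global existence be concluded. Closely tied to this is the point that boundedness of $z$ does not by itself bound $x$ through \eqref{z_definition} (the integral term could in principle grow), so one genuinely needs the exponential stability of $-\rho L$ on the subspace complementary to $\mathrm{span}\{\mathbf{1}\}$ --- exactly the decay estimate already established in the proof of Lemma \ref{main_lemma}. The remaining manipulations (the cancellation leading to \eqref{dotbarvisi} and the two Barbal\'at arguments) are routine.
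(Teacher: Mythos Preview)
Your proposal is correct and follows essentially the same route as the paper: derive \eqref{barVibound}, invoke Lemma~\ref{lemma_Nuss_convergence} on the maximal interval $[0,t_f)$ to bound $z_i,\hat{\theta}_i,\zeta_i$ and obtain $z_i\in\mathcal{L}_2$, use the exponential-kernel representation $w(t)=\int_0^t Ue^{-\rho D(t-\tau)}DV^T[z(\tau)+\bar{x}]\,d\tau$ to bound $w$ and hence $x$, extend to $t_f=\infty$, apply Barbal\'at, and finish with Lemma~\ref{main_lemma}. Your added remarks on the finite-escape-time bookkeeping and on why boundedness of $z$ alone does not directly bound $x$ are exactly the points the paper handles in the same way.
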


\subsection{Second-Order Agents}\label{subsection_second_order_agent_design}
For second-order agents \eqref{agentdynamicsdi} satisfying Assumptions \ref{assumcontrolgain}-\ref{assumption_topology} we define the new filtered error variables
\begin{align}\label{s_definition}
  s_i:=&\dot{z}_i+\lambda z_i\nonumber\\
  =&v_i+\rho\xi_i+\lambda\left(x_i-\bar{x}_i+\rho\int_0^t{\xi_i(s)ds}\right)
\end{align}
with $\lambda>0$ ($1\leq i\leq N$). Equivalently, we have
\begin{equation}\label{zi_solution_wrt_si}
  z_i(t)=e^{-\lambda t}z_i(0)+\int_0^t{e^{-\lambda (t-\tau)}s_i(\tau)d\tau}.
\end{equation}
Following a similar analysis to the proof of $\lim_{t\rightarrow\infty}f(t)=0$ in Lemma \ref{main_lemma} we can prove from \eqref{zi_solution_wrt_si} that if $s_i$ is bounded with $\lim_{t\rightarrow\infty}s_i(t)=0$  then $\lim_{t\rightarrow\infty}z_i(t)=\lim_{t\rightarrow\infty}\dot{z}_i(t)=0$. Also, if $s_i$ is bounded in $[0,t_f)$ then, $|z_i(t)|\leq |z_i(0)|+(1/\lambda)\sup_{\tau\in[0,t_f)}|s_i(\tau)|$ for all $t\in[0,t_f)$.

The dynamics of $s_i$ are given by
\begin{align}\label{si_dynamics}
  \dot{s}_i=b_iu_i+\theta_i^T \phi_i(x_i,v_i) +\lambda v_i+\rho \sum_{j\in\mathcal{N}_i}{a_{ij}(v_i-v_j)}+\lambda \rho\sum_{j\in\mathcal{N}_i}{a_{ij}(x_i-x_j)}.
\end{align}
Define now the nonnegative function $P_i=(1/2)s_i^2$. From \eqref{si_dynamics} the $P_i$ time derivative has the following form
\begin{align}\label{Pidot}
  \dot{P}_i=s_i\bigg[b_iu_i +\theta_i^T  \phi_i(x_i,v_i)+\lambda v_i+\rho \sum_{j\in\mathcal{N}_i}{a_{ij}(v_i-v_j)}+\lambda \rho\sum_{j\in\mathcal{N}_i}{a_{ij}(x_i-x_j)}\bigg].
\end{align}
We consider also parameter vector estimators $\hat{\theta}_i$ of $\theta_i$ with estimation error $\tilde{\theta}_i:=\hat{\theta}_i-\theta_i$. If we select the parameter adaptation law
\begin{equation}\label{hatthetadi}
\dot{\hat{\theta}}_i=\gamma_i\phi_i(x_i,v_i)s_i
\end{equation}
with $\gamma_i>0$ then  the time derivative of the nonnegative function $\bar{P}_i:=P_i+(1/2\gamma_i)\|\tilde{\theta}_i\|^2$ takes the form
\begin{align}\label{barPidot}
  \dot{\bar{P}}_i=s_i\bigg[b_iu_i +\hat{\theta}_i^T  \phi_i(x_i,v_i)+\lambda v_i+\rho \sum_{j\in\mathcal{N}_i}{a_{ij}(v_i-v_j)}+\lambda \rho\sum_{j\in\mathcal{N}_i}{a_{ij}(x_i-x_j)}\bigg].
\end{align}
Selecting now the distributed control law
\begin{align}\label{uidi}
 u_i=N(\zeta_i)\bigg[\hat{\theta}_i^T &\phi_i(x_i,v_i)+\lambda v_i+\rho\sum_{j\in\mathcal{N}_i}{a_{ij}(v_i-v_j)}\nonumber\\
 &+\nu s_i+\lambda\rho \sum_{j\in\mathcal{N}_i}{a_{ij}(x_i-x_j)}\bigg]
\end{align}
with  $\nu>0$ and  Nussbaum parameter update law
\begin{align}\label{zetaidi}
 \dot{\zeta}_i=\nu s_i^2+s_i\bigg[\hat{\theta}_i^T  \phi_i(x_i,v_i)+\lambda v_i+\rho\sum_{j\in\mathcal{N}_i}{a_{ij}(v_i-v_j)}+\rho\lambda \sum_{j\in\mathcal{N}_i}{a_{ij}(x_i-x_j)}\bigg]
\end{align}
we obtain
\begin{align}\label{dotbarPidi}
\dot{\bar{P}}_i=-\nu s_i^2+\dot{\zeta}_i+b_i N(\zeta_i)\dot{\zeta}_i.
\end{align}
Integrating \eqref{dotbarPidi} over $[0,t]$ we result in
\begin{align}\label{barPibound}
  {\bar{P}}_i(t)={\bar{P}}_i(0)&-\nu \int_0^t{s_i^2(s)ds}+\int_0^t{(b_iN(\zeta_i(\tau))+1)\dot{\zeta}_i(\tau)d\tau}.
\end{align}
The dynamical system $\dot{y}_{ag}=g_{ag}(y_{ag})$ with augmented state vector $y_{ag}:=[x_1,v_1,w_1,\hat{\theta}_1,\zeta_1,\cdots,\allowbreak x_N,\allowbreak v_N,w_N,\hat{\theta}_N,\zeta_N]^T$ defined by \eqref{agentdynamicsdi}, \eqref{hatthetadi}, \eqref{uidi}, \eqref{zetaidi} has a smooth locally Lipschitz map $g_{ag}$ and therefore a maximal solution exists over some interval $[0,t_f)$ \cite{morris1973differential}. From  \eqref{barPibound} and Lemma \ref{lemma_Nuss_convergence} we have that $s_i,\hat{\theta}_i,\zeta_i$, $\int_0^t{s_i^2(\tau)d\tau}$ are bounded in $[0,t_f)$. The boundedness of $s$ implies the boundedness of $z$. This, in turn yields the boundedness of $w(t)$ since $w(t)=\int_0^t{Ue^{-\rho D (t-\tau)}DV^T[z(\tau)+\bar{x}]d\tau}$ and therefore $x=\bar{x}+z-\rho w$ and $v=s-\rho Lx-\lambda z$ are also bounded. Thus, the whole solution $y_{ag}$ is bounded and the final time can be extended to infinity ($t_f=\infty$) \cite{morris1973differential}. Due to \eqref{uidi} we also have $u_i\in\mathcal{L}_{\infty}$ and therefore \eqref{si_dynamics} yields $\dot{s}_i\in\mathcal{L}_{\infty}$.   Combining this fact with $s_i\in\mathcal{L}_{\infty}\cap\mathcal{L}_{2}$ we result from Barbal\'at's Lemma  in $\lim_{t\rightarrow\infty}s_i(t)=0$ for all $i=1,\cdots,N$. This, in turn yields $\lim_{t\rightarrow\infty}z_i(t)=0$ for all $i=1,\cdots,N$. Since all conditions of Lemma \ref{main_lemma} hold true asymptotic consensus among all agent positions is achieved. Finally, we have that $\lim_{t\rightarrow\infty}v_i(t)=\lim_{t\rightarrow\infty}(s_i(t)-\rho\xi_i(t)-\lambda z_i(t))=0$. Thus, the following theorem has been proved.
\begin{theorem}\label{theorem_consensus_di}
Consider a set of $N$ agents with dynamics described by \eqref{agentdynamicsdi} satisfying Assumptions \ref{assumcontrolgain}-\ref{assumption_topology}. If we select the control input \eqref{uidi} with parameter estimation law \eqref{hatthetadi} and Nussbaum parameter update law \eqref{zetaidi}  then all signals  in the closed loop are bounded and $\lim_{t\rightarrow\infty}(x_i(t)-x_j(t))=\lim_{t\rightarrow\infty}(v_i(t)-v_j(t))=0$ with $\lim_{t\rightarrow\infty}x_i(t)=\sum_{j=1}^N{\omega_j\bar{x}_j}$, $\lim_{t\rightarrow\infty}v_i(t)=0$.
\end{theorem}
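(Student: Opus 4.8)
The plan is to replicate the Nussbaum-based argument used for Theorem~\ref{theorem_consensus_si}, but built on the filtered error $s_i=\dot z_i+\lambda z_i$ rather than on $z_i$ directly, so that each agent subsystem becomes a first-order-in-$s_i$ system amenable to a purely local controller.

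First I would differentiate $s_i$ along \eqref{agentdynamicsdi} to obtain \eqref{si_dynamics} and check the decisive structural fact: only $u_i$ enters $\dot s_i$, with no neighbour input $u_j$; this is inherited from the input-decoupling of $\dot z_i$ established in Section~\ref{section3} and is preserved under one further differentiation. Then, taking $\bar P_i=\tfrac12 s_i^2+\tfrac{1}{2\gamma_i}\|\tilde\theta_i\|^2$, the adaptation law \eqref{hatthetadi} removes the $\tilde\theta_i$-dependent term from $\dot{\bar P}_i$, the certainty-equivalence control \eqref{uidi} together with the companion update \eqref{zetaidi} reduces $\dot{\bar P}_i$ to $-\nu s_i^2+(1+b_iN(\zeta_i))\dot\zeta_i$, and integration yields the Nussbaum inequality \eqref{barPibound}. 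Since $b_i\neq0$ by Assumption~\ref{assumcontrolgain} and $-\nu\int_0^t s_i^2(\tau)\,d\tau\le 0$, \eqref{barPibound} has exactly the form required by Lemma~\ref{lemma_Nuss_convergence} with $\theta_0=b_i$ and $c=\bar P_i(0)$; on the maximal interval of existence $[0,t_f)$ this gives boundedness of $\bar P_i$, of $\zeta_i$, and of $\int_0^t(1+b_iN(\zeta_i))\dot\zeta_i(\tau)\,d\tau$, hence $s_i,\hat\theta_i,\zeta_i\in\mathcal L_\infty$, and rearranging \eqref{barPibound} shows $\int_0^t s_i^2(\tau)\,d\tau<\infty$, i.e. $s_i\in\mathcal L_2$ on $[0,t_f)$.

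Next I would propagate boundedness through the coupled state variables in a fixed order: $s\in\mathcal L_\infty$ gives $z\in\mathcal L_\infty$ through \eqref{zi_solution_wrt_si} (the bound $|z_i(t)|\le|z_i(0)|+\lambda^{-1}\sup|s_i|$), which gives $w(t)=\int_0^t Ue^{-\rho D(t-\tau)}DV^T[z(\tau)+\bar x]\,d\tau$ bounded by exponential decay of the kernel, hence $x=\bar x+z-\rho w$ bounded, hence $v=s-\rho Lx-\lambda z$ bounded; smoothness of $\phi_i$ then gives $u_i\in\mathcal L_\infty$ from \eqref{uidi} and $\dot s_i\in\mathcal L_\infty$ from \eqref{si_dynamics}. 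Since the whole augmented state is bounded there is no finite escape, so $t_f=\infty$. With $s_i\in\mathcal L_\infty\cap\mathcal L_2$ and $\dot s_i\in\mathcal L_\infty$, Barbal\'at's lemma yields $s_i(t)\to0$; substituting into \eqref{zi_solution_wrt_si} and repeating the convolution-decay estimate used for $f(t)$ in Lemma~\ref{main_lemma} gives $z_i(t)\to0$. Now all hypotheses of Lemma~\ref{main_lemma} hold ($C^1$ trajectories, strong connectivity, $z_i$ bounded and vanishing), so $x_i-x_j\to0$ and $x_i\to\sum_j\omega_j\bar x_j$; consequently $\xi_i=\sum_{j\in\mathcal N_i}a_{ij}(x_i-x_j)\to0$, and then \eqref{s_definition} gives $v_i=s_i-\rho\xi_i-\lambda z_i\to0$, whence also $v_i-v_j\to0$.

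The step I expect to be the main obstacle is not a single estimate but the bookkeeping that keeps the argument non-circular: $v_i$ and $\xi_i$ depend on the entire position vector, which depends on $w$, which depends on $z$, which depends on $s$, so the boundedness implications must be drawn in the order $s\!\to\!z\!\to\!w\!\to\!x\!\to\!v$, and one must verify that the certainty-equivalence terms in \eqref{uidi}--\eqref{zetaidi} do not reintroduce a dependence on $v$ or $u$ before those are shown bounded. A second delicate point is that $\dot\zeta_i$ is sign-indefinite, so nothing about $\zeta_i$ (hence about the gain $N(\zeta_i)$) is available a priori; the whole scheme hinges on matching \eqref{barPibound} to the precise template of Lemma~\ref{lemma_Nuss_convergence} with the nonzero constant $\theta_0=b_i$ furnished by Assumption~\ref{assumcontrolgain}.
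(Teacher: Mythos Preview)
Your proposal is correct and follows essentially the same route as the paper: build the Lyapunov-like function $\bar P_i$ on the filtered error $s_i$, use \eqref{hatthetadi}--\eqref{zetaidi} to obtain the Nussbaum inequality \eqref{barPibound}, invoke Lemma~\ref{lemma_Nuss_convergence} for boundedness of $s_i,\hat\theta_i,\zeta_i$ and $s_i\in\mathcal L_2$, then propagate boundedness in the order $s\to z\to w\to x\to v$ to rule out finite escape, apply Barbal\'at to get $s_i\to0$ and hence $z_i\to0$, and finish with Lemma~\ref{main_lemma} and the identity $v_i=s_i-\rho\xi_i-\lambda z_i$. The only cosmetic difference is that the paper extends to $t_f=\infty$ immediately after establishing boundedness of the augmented state and only then records $u_i,\dot s_i\in\mathcal L_\infty$, whereas you interleave these; both orderings are valid.
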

\begin{remark}
The proposed distributed control approach can also be generalized to other agent models such as those considered in\cite{chen2016adaptiveieeeac} using some of the tools described therein.  The purpose of this note is not to solve the adaptive consensus problem in its more general form but to propose an alternative approach that significantly simplifies the design and allows for several generalizations.
\end{remark}
\section{Simulation Example}\label{section5}
In this Section, we consider a group of four agents with first-order (Case 1) or second-order dynamics (Case 2) described by \eqref{agentdynamicssi}, \eqref{agentdynamicsdi} respectively and a strongly connected topology $\mathcal{G}$ depicted in Fig. \ref{figure_graph_simulation}.
\begin{figure}
\centering
\includegraphics[width=0.4\columnwidth, bb=200 365 280 435]{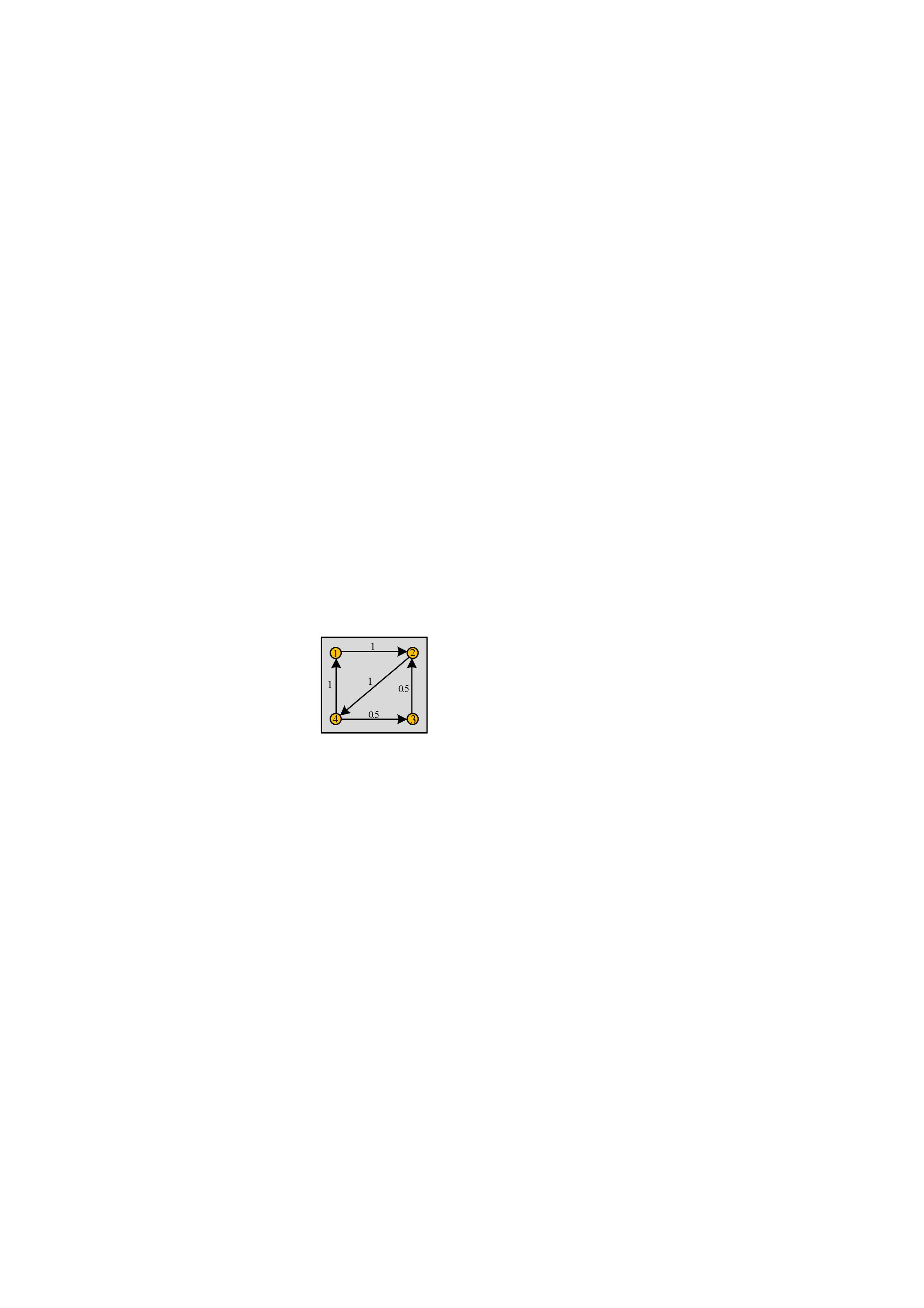}
\caption{Strongly connected unbalanced graph $\mathcal{G}$.}
\label{figure_graph_simulation}
\end{figure}
The left-eigenvector associated with the zero eigenvalue of the Laplacian matrix is $\omega=(1/9)[2,2,2,3]^T$.
In both cases completely unknown control directions are considered with $b_1=1$, $b_2=-2$, $b_3=2$, $b_4=-3/2$, parameters $\theta_1=\theta_2=1$, $\theta_3=-1$, $\theta_4=2$ and initial conditions $x(0)=[1,2,3,-1]^T$. For both cases the control and adaptation parameters $\rho=\nu=\gamma_i=0.1$ ($i=1,2,3,4$) and $\bar{x}=[1,2,3,4]^T$ are selected. For Case 1 the nonlinearities $\phi_1(x_1)=\sin(x_1)$, $\phi_2(x_2)=\cos(x_2^2)$, $\phi_3(x_3)=0.5x_3^2+1$, $\phi_4(x_4)=x_4\sin(x_4)$ are assumed. For Case 2 we consider $\phi_1(x_1,v_1)=\sin(x_1)\cos(v_1)$, $\phi_2(x_2,v_2)=v_2\cos(x_2^2)$, $\phi_3(x_3,v_3)=1+0.5x_3v_3$, $\phi_4(x_4,v_4)=\sin(x_4+v_4)$ and the additional parameter $\lambda=1.5$ and initial condition $v(0)=0$. Simulation results are shown in Fig. \ref{Fig:SIstate}-\ref{Fig:DIzeta}. As expected consensus is achieved in both cases with final consensus point $\lim_{t\rightarrow\infty}x_i(t)=\omega^T\bar{x}=8/3$ ($1\leq i\leq 4$) while all signals in the  closed-loop remain bounded.
\begin{figure}
\centering
\includegraphics[width=0.95\columnwidth]{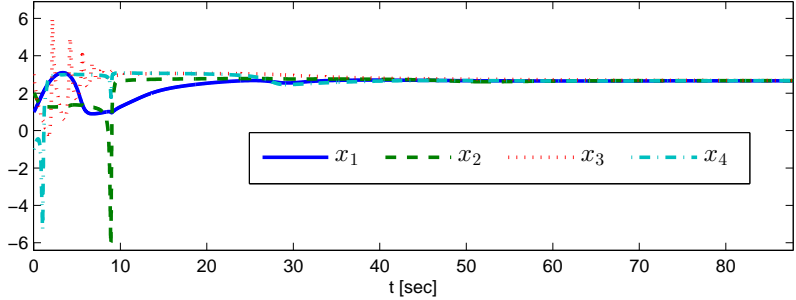}
\caption{The positions $x_{i}$ for first-order agents ($i=1,\ldots,4$).}
\label{Fig:SIstate}
\end{figure}
\begin{figure}[th]
\centering
\includegraphics[width=0.95\columnwidth]{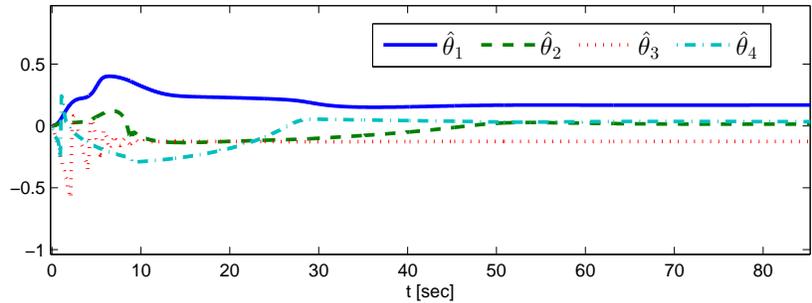}
\caption{The estimation variables  $\hat{\theta}_{i}$ for first-order agents ($i=1,\ldots,4$).}
\label{Fig:SItheta}
\end{figure}
\begin{figure}[th]
\centering
\includegraphics[width=0.95\columnwidth]{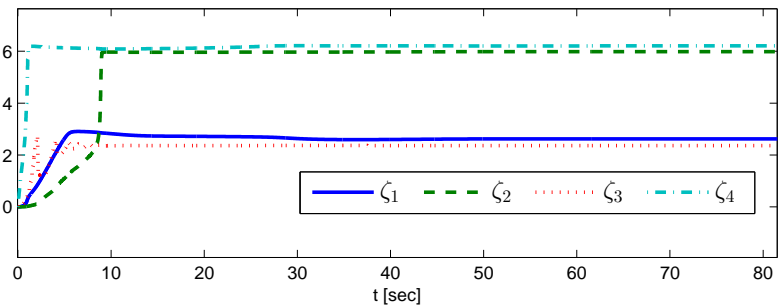}
\caption{The Nussbaum parameters  ${\zeta}_{i}$ for first-order agents ($i=1,\ldots,4$).}
\label{Fig:SIzeta}
\end{figure}
\begin{figure}[th]
\centering
\includegraphics[width=0.95\columnwidth]{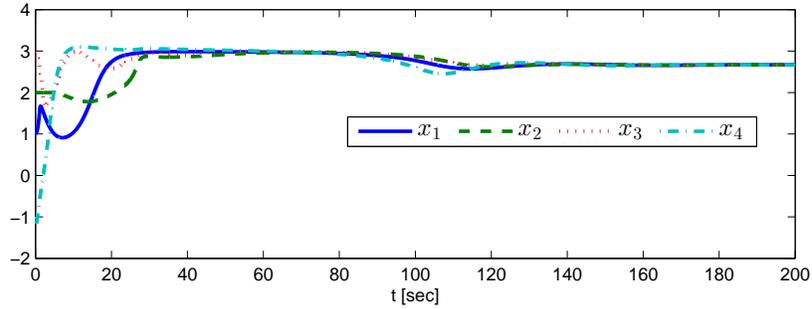}
\caption{The positions $x_{i}$ for second-order agents ($i=1,\ldots,4$).}
\label{Fig:DIstate}
\end{figure}
\begin{figure}[th]
\centering
\includegraphics[width=0.95\columnwidth]{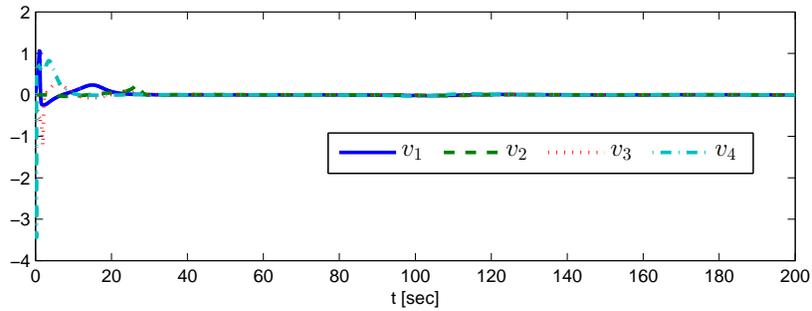}
\caption{The velocities $v_{i}$ for second-order agents ($i=1,\ldots,4$).}
\label{Fig:DIvel}
\end{figure}
\begin{figure}[th]
\centering
\includegraphics[width=0.95\columnwidth]{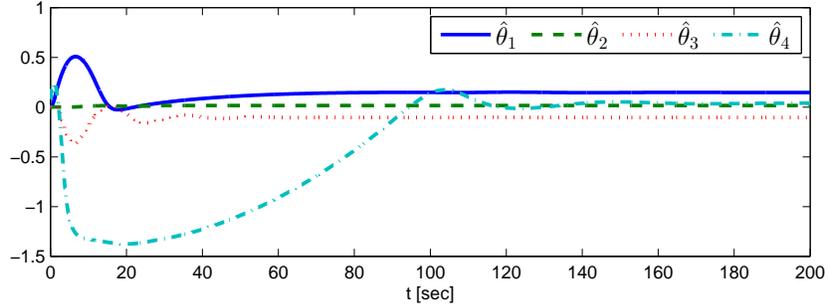}
\caption{The estimation variables  $\hat{\theta}_{i}$ for second-order agents ($i=1,\ldots,4$).}
\label{Fig:DItheta}
\end{figure}
\begin{figure}[th]
\centering
\includegraphics[width=0.95\columnwidth]{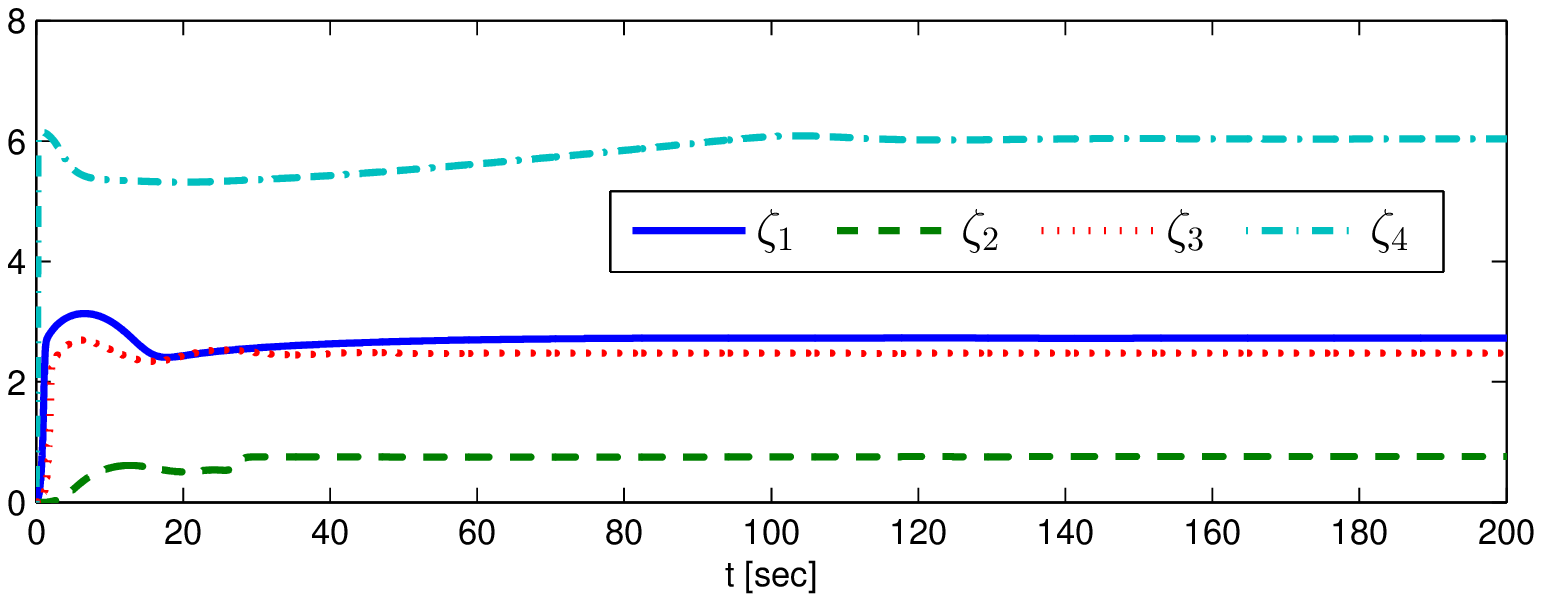}
\caption{The Nussbaum parameters  ${\zeta}_{i}$ for second-order agents ($i=1,\ldots,4$).}
\label{Fig:DIzeta}
\end{figure}

\section{Conclusion}\label{section6}
A novel change of variables is introduced in this work that transforms the consensus design problem into a simple regulation problem. Making use of this new transformation, an adaptive cooperative control law is proposed for nonlinear agents with unknown and non-identical control directions. Future work may explore  applications to more general system classes.

\end{document}